\newtheorem{theorem}{Theorem}
\newtheorem{lemma}[theorem]{Lemma}
\newtheorem{claim}[theorem]{Claim}
\newcommand{\cost}{\mathsf{cost}}
\begin{document}

\title{Improved and Simplified Inapproximability for $k$-means}
\author{Euiwoong Lee\thanks{Computer Science Department, Carnegie Mellon University, Pittsburgh, PA 15213}~\thanks{Supported by the Samsung Scholarship and NSF CCF-1115525.} \and Melanie Schmidt\footnotemark[1]~\thanks{Supported by the German Academic Exchange Service (DAAD).} \and John Wright\footnotemark[1]~\thanks{Supported by a Simons Fellowship in Theoretical Computer Science.}}

\maketitle
\begin{abstract}
The $k$-means problem consists of finding $k$ centers in $\mathbb{R}^d$ that minimize the sum of the squared distances of all points in an input set $P$ from $\mathbb{R}^d$ to their closest respective center. 
Awasthi et.\ al.\ recently showed that there exists a constant $\varepsilon' > 1$ such that it is NP-hard to approximate the $k$-means objective within a factor of $c$. We establish that the constant $\varepsilon'$ is at least $1.0013$. 
\end{abstract}

For a given set of points $P \subset \mathbb{R}^d$, the \emph{k-means problem} consists of finding a partition of $P$ into $k$ clusters $(C_1, \dots, C_k)$ with corresponding centers $(c_1, \dots, c_k)$ that minimize the sum of the squared distances of all points in $P$ to their corresponding center, i.e.\ the quantity
\[
\arg\min_{(C_1, \dots, C_k), (c_1, \dots, c_k)} \sum_{i=1}^k \sum_{x \in C_i} ||x-c_i||^2
\]
where $||\cdot||$ denotes the Euclidean distance.
The $k$-means problem has been well-known since the fifties, when Lloyd~\cite{L57} developed the famous local search heuristic also known as the $k$-means algorithm. Various exact, approximate, and heuristic algorithms have been developed since then. For a constant number of clusters $k$ and a constant dimension $d$, the problem can be solved by enumerating weighted Voronoi diagrams~\cite{IKI94}. If the dimension is arbitrary but the number of centers is constant, many polynomial-time approximation schemes are known. For example,~\cite{FL11} gives an algorithm with running time $\mathcal{O}(n d + 2^{\text{poly}(1/\varepsilon,k)})$. In the general case, only constant-factor approximation algorithms are known~\cite{JV01,KMNPSW04}, but no algorithm with an approximation ratio smaller than 9 has yet been found.

Surprisingly, no hardness results for the $k$-means problem were known even as recently as ten years ago. Today, it is known that the $k$-means problem is NP-hard, even for constant $k$ and arbitrary dimension $d$~\cite{ADHP09,D08} and also for arbitrary $k$ and constant $d$~\cite{MNV09}. Early this year, Awasthi et.\ al.~\cite{ACKS15} showed that there exists a constant $\varepsilon' > 0$ such that it is NP-hard to approximate the $k$-means objective within a factor of $1+\varepsilon'$. They use a reduction from the Vertex Cover problem on triangle-free graphs.  Here, one is given a graph $G=(V,E)$ that does not contain a triangle, and the goal is to compute a minimal set of vertices $S$ which \emph{covers} all the edges, meaning that for any $(v_i, v_j) \in E$, it holds that $v_i \in S$ or $v_j \in S$.
To decide if $k$ vertices suffice to cover a given $G$, they construct a $k$-means instance in the following way. Let $b_i=(0,\ldots,1,\ldots,0)$ be the $i$th vector in the standard basis of $\mathbb{R}^{|V|}$. For an edge $e=(v_i,v_j)\in E$, set $x_e=b_i+b_j$. The instance consists of the parameter $k$ and the point set $\{x_e \mid e \in E\}$. Note that the number of points is $|E|$ and their dimension is $|V|$. 

A relatively simple analysis shows that this reduction is approximation-preserving. 
A vertex cover $S \subseteq V$ of size $k$ corresponds to a solution for $k$-means where we have centers at $\{ b_i : v_i \in S \}$ and each point $x_{(v_i, v_j)}$ is assigned to a center in $S \cap \{ b_i, b_j \}$ (which is nonempty because $S$ is a vertex cover).  In addition, it can also be shown that a good solution for $k$-means reveals a small vertex cover of $G$ when $G$ is triangle-free. 

Unfortunately, this reduction transforms $(1 + \varepsilon)$-hardness for Vertex Cover on triangle-free graphs to $(1 + \varepsilon')$-hardness for $k$-means where $\varepsilon' = O(\frac{\varepsilon}{\Delta})$ and $\Delta$ is the maximum degree of $G$.
Awasthi et.\ al.~\cite{ACKS15} proved hardness of Vertex Cover on triangle-free graphs via a reduction from general Vertex Cover, where the best hardness result of Dinur and Safra~\cite{DS05} has an unspecified large constant $\Delta$. 
Furthermore, the reduction uses a sophisticated spectral analysis to bound the size of the minimum vertex cover of a suitably chosen graph product.

Our result is based on the observation that hardness results for Vertex Cover on small-degree graphs lead to hardness of Vertex Cover on triangle-free graphs with the same degree in an extremely simple way. 
Combined with the result of Chleb\'{i}k and Chleb\'{i}kov\'{a}~\cite{CC06} that proves hardness of approximating Vertex Cover on $4$-regular graphs within $\approx 1.02$, 
this observation gives hardness of Vertex Cover on triangle-free, degree-$4$ graphs without relying on the spectral analysis. 
The same reduction from Vertex Cover on triangle-free graphs to $k$-means then proves APX-hardness of $k$-means, with an improved ratio due to the small degree of $G$. 

\section{Main Result}
Our main result is the following theorem.
\begin{theorem}
It is NP-hard to approximate $k$-means within a factor $1.0013$. 
\end{theorem}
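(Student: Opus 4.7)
The plan is to compose three reductions in series. First, start from the hardness of Chleb\'{i}k and Chleb\'{i}kov\'{a}~\cite{CC06}: Vertex Cover on $4$-regular graphs is NP-hard to approximate within some factor $\alpha_0 \approx 1.02$. Second, reduce this to Vertex Cover on triangle-free graphs of maximum degree $4$ by a simple local modification of the graph. Third, apply the Awasthi et al.~\cite{ACKS15} reduction from triangle-free Vertex Cover to $k$-means, which, as described in the introduction, converts $(1+\varepsilon)$-hardness for triangle-free Vertex Cover on graphs of maximum degree $\Delta$ into $(1 + \Theta(\varepsilon/\Delta))$-hardness for $k$-means.

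The key new step is the middle reduction, which realizes the ``extremely simple'' observation advertised in the introduction. My plan is to subdivide every edge of the input $4$-regular graph $G$ into a path of length three: replace each edge $uv$ by $u - a_{uv} - b_{uv} - v$, where $a_{uv}$ and $b_{uv}$ are two new degree-$2$ vertices. The resulting graph $G'$ has maximum degree $4$ (the original vertices keep their four incident edges, and each new vertex has degree $2$); moreover, every cycle of $G'$ arises from a cycle of $G$ by inflating each edge into a path of length three, so cycle lengths in $G'$ are multiples of $3$ and in particular at least $9$, ruling out triangles. Note that odd cycles of $G$ inflate to odd cycles of $G'$, so $G'$ is not bipartite and in particular its Vertex Cover problem is not trivialized by K\"{o}nig's theorem. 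A short case analysis over a single subdivided path (equivalently, an integer-programming argument separated per path) yields the clean identity $\tau(G') = |E(G)| + \tau(G)$. Since $G$ is $4$-regular we have $|E(G)| = 2|V(G)|$ and $\tau(G) = \Theta(|V(G)|)$, so this step contracts the hardness gap by only a fixed constant factor.

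Composing all three steps yields a $k$-means hardness factor of the form $1 + (\alpha_0 - 1)\cdot c \cdot \Theta(1/\Delta)$ with $\Delta = 4$ and a constant $c \in (0,1)$ coming from step two. The main obstacle I anticipate is the bookkeeping: one must track the exact gap-preservation constant from step two, together with the explicit constant hidden in the $\Theta(\varepsilon/\Delta)$ factor of~\cite{ACKS15}, carefully enough to certify that the final hardness gap actually exceeds $1.0013$ rather than some weaker number. Everything else---the graph-theoretic properties of the length-$3$ subdivision gadget, the identity $\tau(G') = |E(G)| + \tau(G)$, and the $k$-means analysis---follows by elementary combinatorics or by direct adaptation of the argument in~\cite{ACKS15} to the smaller value of $\Delta$.
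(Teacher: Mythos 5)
Your overall architecture (Chleb\'{i}k--Chleb\'{i}kov\'{a} $\to$ triangle-free degree-$4$ Vertex Cover via a subdivision gadget $\to$ the Awasthi et al.\ reduction to $k$-means) is the same as the paper's, and your gadget analysis is sound: subdividing an edge into a path of length three does preserve maximum degree, kill triangles, keep the parity of cycles, and satisfy $\tau(G') = \tau(G) + |E(G)|$. The problem is quantitative, and it is exactly the ``bookkeeping'' risk you flagged: subdividing \emph{every} edge of the $4$-regular graph $G$ adds $|E(G)| = 2n$ to the vertex cover, so the Vertex Cover gap $(\alpha_{max}-\alpha_{min})n \approx n/98.8$ is diluted against a cover of size about $(\alpha_{min}+2)n$ and a point set of size $6n$. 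Carrying this through the $k$-means analysis (which, in both the paper and \cite{ACKS15}, loses an additional factor of $3$ in the soundness case split) gives a hardness factor of $1 + \frac{\alpha_{max}-\alpha_{min}}{3(4-\alpha_{min})} = 1 + \frac{1}{3(14\mu_{4,k}+40)} \approx 1.00097$, which falls short of $1.0013$. So as written your construction proves APX-hardness of $k$-means but not the stated constant.

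The missing idea is to subdivide only \emph{half} of the edges. The paper first partitions $E(G)$ into $E_1$ and $E_2$ with $|E_1|=|E_2|=n$ such that $(V(G),E_2)$ is bipartite --- possible because every graph has a cut containing at least half its edges --- and then splits only the edges of $E_1$. Triangle-freeness still holds because every triangle of $G$ must use an edge of $E_1$ (a bipartite subgraph contains no triangle), and the cover identity becomes $\tau(G')=\tau(G)+n$ rather than $\tau(G)+2n$, with only $4n$ points rather than $6n$. This yields $1 + \frac{\alpha_{max}-\alpha_{min}}{3(3-\alpha_{min})} = 1 + \frac{1}{3(10\mu_{4,k}+28)} \approx 1.00136 \geq 1.0013$. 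To repair your proof you would either need this half-subdivision trick (or some other way to avoid paying $2n$ extra cover vertices), or a sharper soundness analysis of the $k$-means step that recovers more than a $\frac{1}{3}$ fraction of the Vertex Cover gap; neither is supplied by the argument as you have outlined it.
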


We prove hardness of $k$-means by a reduction from Vertex Cover on $4$-regular graphs, 
for which we have the following hardness result of Chleb\'{i}k and Chleb\'{i}kov\'{a}~\cite{CC06}.
\begin{theorem}[\cite{CC06}, see also~\ref{appendix:thm2}] 
Given a $4$-regular graph $G = (V(G), E(G))$, it is NP-hard to distinguish to distinguish the following cases.
\begin{itemize}
\item $G$ has a vertex cover with at most $\alpha_{min} |V(G)|$ vertices. 
\item Every vertex cover of $G$ has at least $\alpha_{max} |V(G)|$ vertices. 
\end{itemize}
Here, $\alpha_{min} = (2\mu_{4,k}+8)/(4 \mu_{4,k}+12)$ and $\alpha_{max} = (2\mu_{4,k}+9)/(4 \mu_{4,k}+12)$ with $\mu_{4,k} \le 21.7$. 
In particular, it is NP-hard to approximate Vertex Cover on degree-$4$ graphs within a factor of $(\alpha_{max}/\alpha_{min}) \geq 1.0192$. 
\end{theorem}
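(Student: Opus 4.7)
The statement is a repackaging of the main result of Chleb\'{i}k and Chleb\'{i}kov\'{a}~\cite{CC06} into the absolute form needed for the subsequent $k$-means reduction. The plan is to invoke their hardness result essentially as a black box and verify via careful bookkeeping that their construction yields the specific thresholds $\alpha_{min}$ and $\alpha_{max}$ claimed here.

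\emph{Step 1.} First, I would invoke the main theorem of \cite{CC06}, which asserts NP-hardness of approximating minimum Vertex Cover on $4$-regular graphs within a multiplicative factor of $(2\mu_{4,k}+9)/(2\mu_{4,k}+8)$, where $\mu_{4,k}\le 21.7$ is an explicit constant arising from their gadget analysis. Their reduction composes a gap-preserving reduction from a hard base problem (for instance, a bounded-occurrence CSP obtained via PCP techniques) with a dedicated $4$-regular gadget that encodes each binary choice of the base instance by a fixed VC ``cost.''

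\emph{Step 2.} I would then track the vertex count and VC size through the composition to recover the absolute thresholds. In \cite{CC06}'s construction, each gadget contributes exactly $4\mu_{4,k}+12$ new vertices and either $2\mu_{4,k}+8$ or $2\mu_{4,k}+9$ vertices to the global minimum VC, depending on the setting of the associated binary choice. Summing over all $n$ gadgets yields a $4$-regular graph $G$ on $(4\mu_{4,k}+12)\,n$ vertices whose minimum VC equals $(2\mu_{4,k}+8)\,n$ in the YES case and is at least $(2\mu_{4,k}+9)\,n$ in the NO case, matching $\alpha_{min}|V(G)|$ and $\alpha_{max}|V(G)|$. Plugging in $\mu_{4,k}=21.7$ then gives $\alpha_{max}/\alpha_{min}\ge 52.4/51.4 \ge 1.0192$ by direct arithmetic.

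The main obstacle is \emph{Step 2}: verifying that the per-gadget VC contributions are genuinely additive across the composition. Concretely, any low-cost VC in the output graph must decompose cleanly into per-gadget choices whose consistency pattern corresponds to a low-cost solution of the base instance, so that the base-instance gap transfers verbatim to the $4$-regular VC gap. This ``gadget composability'' is the soundness heart of \cite{CC06}'s analysis and is the one ingredient that cannot be replaced by a purely numerical calculation; once it is in hand, the rest of the theorem follows by counting vertices and plugging in constants.
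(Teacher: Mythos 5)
Your proposal matches the paper's treatment: the theorem is imported from~\cite{CC06} as a black box, and the only genuine work is translating their gap statement into the absolute thresholds $\alpha_{min}, \alpha_{max}$ and checking the arithmetic $(2\mu_{4,k}+9)/(2\mu_{4,k}+8) \geq 1.0192$ at $\mu_{4,k} \le 21.7$. The paper's appendix does this bookkeeping by reading the two thresholds directly off Theorem 17 of~\cite{CC06} and using $|V(H)| = 2M(H)$ to replace $(|V(H)|-M(H))/k$ and $|V(H)|/k$ by $\mu_{4,k}$, rather than via your (somewhat speculative) per-gadget vertex count, but the substance is the same.
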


Given a $4$-regular graph $G = (V(G), E(G))$ for Vertex Cover with $n := |V(G)|$ vertices and $2n$ edges, 
we first partition $E(G)$ into $E_1$ and $E_2$ such that $|E_1| = |E_2| = |E(G)|/2 = n$ and such that the subgraph $(V(G), E_2)$ is bipartite. 
Such a partition always exists: every graph has a cut containing at least half of the edges (well-known; see, e.~g.,~\cite{MU05}). Choose $n$ of these cut edges for $E_2$, let $E_1$ be the remaining edges.
%
We define $G' = (V(G'), E(G'))$ by {\em splitting} each edge in $E_1$ into three edges.
Formally, $G'$ is given by  
\begin{align*}
& V(G') = V(G) \cup \left(\bigcup_{e = (u, v) \in E_1} \{ v'_{e, u}, v'_{e, v} \}\right), \\
& E(G') = \left(\bigcup_{e = (u, v) \in E_1} \left\{ (v, v'_{e, v}), (v'_{e, v}, v'_{e, u}), (v'_{e, u}, u) \right\}\right) \cup E_2 \enskip.
\end{align*}
Notice that $V$ has $n + 2n = 3n$ vertices and $3n + n = 4n$ edges. 
It is also easy to see that the maximum degree of $V$ is 4, and that $V$ does not have any triangle, since any triangle of $G$ contains at least one edge of $E_1$ (because $(V(G), E_2)$ is bipartite) and each edge of $E_1$ is split into three. 

Given $G'$ as an instance of Vertex Cover on triangle-free graphs, the reduction to the $k$-means problem is the same as before. 
Let $b_i=(0,\ldots,1,\ldots,0)$ be the $i$th vector in the standard basis of $\mathbb{R}^{3n}$. For an edge $e=(v_i,v_j)\in E(G')$, set $x_e=b_i+b_j$. The instance consists of the parameter $k = (\alpha_{min} + 1)n$ and the point set $\{x_e \mid e \in E\}$. Notice that the number of points is now $4n$ and their dimension is $3n$. 

We now analyze the reduction. 
Note that for $k$-means, once a cluster is fixed as a set of points, the optimal center and the cost of the cluster are determined\footnote{For $k=1$, the optimal solution to the $k$-means problem is the \emph{centroid} of the point set. This is due to a well-known fact, see, e.\ g., Lemma 2.1 in \cite{KMNPSW04}.}. 
Let $\cost(C)$ be the cost of a cluster $C$. 
We abuse notation and use $C$ for the set of edges $\{ e : x_e \in C \} \subseteq E(G')$ as well. 
For an integer $l$, define an \emph{$l$-star} to be a set of $l$ distinct edges incident to a common vertex. 
The following lemma is proven by Awasthi et.\ al.\ and shows that if $C$ is cost-efficient, then two vertices are sufficient to cover many edges in $C$. Furthermore, an {\em optimal} $C$ is either a star or a triangle. 

\begin{lemma}[\cite{ACKS15}, Proposition 9 and Lemma 11]\label{lem:cost}
Let $C = \{ x_{e_1}, \dots, x_{e_l} \}$ be a cluster. 
Then $l - 1 \leq \cost(C) \leq 2l - 1$, and there exist two vertices that cover at least $\lceil 2l - 1 - \cost(C) \rceil$ edges in $C$.
Furthermore, $\cost(C) = l - 1$ if and only if $C$ is either an $l$-star or a triangle, and otherwise, $\cost(C) \ge l- 1/2$.
\end{lemma}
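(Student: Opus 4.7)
My plan is to rewrite $\cost(C)$ in terms of the degree sequence of the subgraph of $G'$ carrying $C$. Writing $d_v$ for the number of edges of $C$ incident to vertex $v$, the optimal center (centroid) $\mu$ of $\{x_e : e \in C\}$ has $v$-th coordinate $d_v/l$, and expanding $\cost(C) = \sum_{e \in C}\|x_e\|^2 - l\|\mu\|^2$ using $\|x_e\|^2 = 2$ gives
\[
\cost(C) \;=\; 2l - \frac{1}{l}\sum_v d_v^2 \;=\; 2(l-1) - \frac{2P}{l},
\]
where $P := \sum_v \binom{d_v}{2}$ is the number of unordered pairs of distinct edges of $C$ sharing a vertex. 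The sandwich $l-1 \le \cost(C) \le 2l-1$ then reduces to $0 \le P \le \binom{l}{2}$, both of which are immediate (the upper bound because $P$ counts a subset of all $\binom{l}{2}$ edge pairs). Equality $\cost(C) = l-1$ requires every pair of edges in $C$ to share a vertex; fixing any $e = (u,v) \in C$, if some other edge avoids $u$ and some other avoids $v$ then the sharing condition forces these two to meet at a common third vertex $w$, after which no fourth edge can meet all three of $(u,v)$, $(u,w)$, $(v,w)$, so $C$ must be a star at $u$, a star at $v$, or the triangle on $\{u,v,w\}$.

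The two-vertex covering statement admits a clean averaging proof. For each $e = (u,v) \in C$ the pair $\{u,v\}$ covers exactly $d_u + d_v - 1$ edges of $C$, and summing over $e \in C$ gives
\[
\sum_{e = (u,v) \in C}(d_u + d_v - 1) \;=\; \sum_v d_v^2 - l,
\]
so some $e^* \in C$ has its two endpoints covering at least the average $\sum_v d_v^2/l - 1 = 2l - 1 - \cost(C)$ edges of $C$; the ceiling is automatic because the count is an integer.

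The delicate part -- and the obstacle I anticipate -- is the gap $\cost(C) \ge l - 1/2$ when $C$ is neither a star nor a triangle, because the naive estimate ``some pair of edges is disjoint, hence $P \le \binom{l}{2}-1$'' only gives $\cost(C) \ge l - 1 + 2/l$, which is insufficient once $l \ge 5$. My plan is to prove the sharper structural bound $\sum_v d_v^2 \le l^2 - l + 6$ in this regime, with equality achieved by the near-star obtained from an $(l-1)$-star by joining two of its spokes. This I would verify by a case analysis on the maximum degree $\Delta = \max_v d_v$: if $\Delta = l-1$ the remaining edge attaches to the $(l-1)$-star in one of three ways and direct computation handles each; if $\Delta \le l-2$, the bound $\sum_v d_v^2 \le d_1^2 + d_2(2l - d_1)$ in terms of the two largest degrees, combined with graphical realizability (Erd\H{o}s--Gallai rules out highly concentrated sequences like $(\Delta,\Delta,\Delta,1)$ that would collapse to too few vertices to host $l$ simple edges), is enough to push $\sum_v d_v^2$ below $l^2 + l/2$. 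The small cases $l \le 3$ are handled by direct enumeration.
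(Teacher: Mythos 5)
First, a point of reference: the paper does not prove this lemma at all --- it imports it verbatim from~\cite{ACKS15} (Proposition~9 and Lemma~11) --- so there is no in-paper argument to compare yours against, and your proposal must stand on its own. On its own terms, most of it is correct and complete. The identity $\cost(C) = 2l - \frac{1}{l}\sum_v d_v^2 = 2(l-1) - \frac{2P}{l}$ is right and immediately gives both bounds of the sandwich; the equality analysis correctly reduces to ``every pair of edges of $C$ intersects,'' and your structural argument (star at $u$, star at $v$, or the triangle on $\{u,v,w\}$ with no fourth edge possible) is sound; the averaging proof of the two-vertex covering statement is clean and complete. Your target inequality for the last part is also correct: every simple graph with $l$ edges that is not a star satisfies $\sum_v d_v^2 \le l^2 - l + 6$, which gives $\cost(C) \ge l + 1 - 6/l \ge l - \frac12$ for $l \ge 4$. (One minor error: the extremal configuration is not unique --- a triangle with $l-3$ pendant edges attached to one of its vertices also attains $l^2 - l + 6$.)

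The genuine gap is that the case $\Delta \le l-2$ of that inequality is only a plan, and the ingredients you name do not obviously close it: the bound $\sum_v d_v^2 \le d_1^2 + d_2(2l - d_1)$ degrades to $2l\Delta$, which can exceed $l^2 + l/2$, precisely when $d_2$ is large, and ruling out such degree sequences via Erd\H{o}s--Gallai is exactly the fiddly work you have not carried out. A cleaner finish: let $D = \binom{l}{2} - P$ be the number of \emph{disjoint} pairs of edges of $C$; then $\sum_v d_v^2 = l^2 + l - 2D$, so $\cost(C) = l - 1 + \frac{2D}{l}$ and the claim $\sum_v d_v^2 \le l^2 - l + 6$ is equivalent to $D \ge l-3$. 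If $u$ has maximum degree $\Delta$, each of the $l - \Delta$ edges avoiding $u$ meets at most two of the $\Delta$ edges at $u$, whence $D \ge (l-\Delta)(\Delta - 2)$; writing $t = l - \Delta$, the quadratic $t(l-t-2) - (l-3)$ has roots $t=1$ and $t = l-3$, so $D \ge l-3$ whenever $3 \le \Delta \le l-1$. For $\Delta \le 2$ one has $\sum_v d_v^2 \le 2\Delta l \le 4l \le l^2 - l + 6$ for $l \ge 3$, and all of $l \le 4$ is already covered by your naive estimate $P \le \binom{l}{2} - 1$. This disposes of the entire $\Delta \le l-2$ range with no realizability argument.
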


\subsection{Completeness} 
\begin{lemma}
If $G$ has a vertex cover of size at most $\alpha_{min} n$, the instance of $k$-means produced by the reduction admits a solution of cost at most $(3 - \alpha_{min})n$.
\end{lemma}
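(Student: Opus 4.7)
The plan is to take a vertex cover $S$ of $G$ with $|S|\le\alpha_{min}n$ and use it to build an explicit $k$-clustering of $\{x_e : e \in E(G')\}$. For each $v \in S$ I would form a ``star'' cluster containing every $G'$-edge that is incident to $v$ and covered by $v$: namely, the $E_2$-edges covered by $v$, together with, for each $E_1$-edge $e=(u,v)$ incident to $v$ in $G$, the outer piece $(v, v'_{e,v})$ of its subdivision. The middle pieces $(v'_{e,u}, v'_{e,v})$ are never incident to $S$, so for each $e=(u,v)\in E_1$ I would add one extra cluster: a $2$-star (middle piece plus the outer piece on the uncovered side, cost $1$ by Lemma~\ref{lem:cost}) when exactly one of $u,v$ lies in $S$, or the singleton $\{x_{(v'_{e,u}, v'_{e,v})}\}$ (cost $0$) when both do. This produces exactly $|S|+n$ clusters.

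By Lemma~\ref{lem:cost}, each $S$-star of size $l_v$ contributes $l_v-1$. A short bookkeeping calculation then gives total cost exactly $3n-|S|$, independent of how many $E_1$-edges happen to have both endpoints in $S$. The clean cancellation is that promoting one outer piece from an ``extra'' cluster into its $S$-star simultaneously raises the star's cost by $1$ (by growing $l_v$) and drops that extra cluster from a $2$-star to a singleton (saving $1$).

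Finally, since $|S|+n\le(\alpha_{min}+1)n=k$, I would reach exactly $k$ clusters by repeatedly splitting off a single point from any cluster of size $\ge 2$. Each such split decreases the total cost by exactly $1$ (an $l$-star of cost $l-1$ becomes an $(l-1)$-star plus a singleton of total cost $l-2$), and the required $\alpha_{min}n-|S|$ splits are always available because the average cluster size stays at least $4n/k=4/(\alpha_{min}+1)>2$ throughout. The final cost is $3n-|S|-(\alpha_{min}n-|S|)=(3-\alpha_{min})n$. The only step requiring any care is the cost bookkeeping described in the second paragraph; everything else is a direct consequence of Lemma~\ref{lem:cost}.
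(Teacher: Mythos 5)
Your proof is correct and takes essentially the same route as the paper: extend the cover $S$ of $G$ to a vertex cover of $G'$ of size $|S|+n$ (one subdivision vertex per $E_1$-edge), cluster the points into stars, and use $\cost = 4n - (\text{number of clusters})$. Your explicit final step of splitting off singletons to pad the solution out to exactly $k$ clusters is a detail the paper glosses over (it implicitly treats $|S'|$ as equal to $k$), but it is the right fix and does not change the approach.
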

\begin{proof}
Suppose $G$ has a vertex cover $S$ with at most $\alpha_{min} n$ vertices.
For each edge $e = (u, v) \in E_1$, let $v'(e) = v'_{e, u}$ if $v \in S$, and $v'(e) = v'_{e, v}$ otherwise. 
Let $S' := S \cup (\cup_{e \in E_1} \{ v'(e) \}$. Since $S$ is a vertex cover of $G$, for every edge $e \in E_1$, $S$ and $v'(e)$ cover all three edges of $E(G')$ corresponding to $e$. 
Therefore, $S'$ is a vertex cover of $G'$, and since $|E_1|=n$, it has at most $(\alpha_{min} + 1)n$ vertices. 

For the $k$-means solution, let each cluster correspond to a vertex in $S'$, and assign each edge $e \in E(G')$ to the cluster corresponding to a vertex incident to $e$ (choose an arbitrary one if there are two). 
Each edge is assigned to a cluster since $S'$ is a vertex cover, and each cluster is a star by construction.
Since there are $4n$ points and $k=\alpha_{min}n +n$, the total cost of the solution is, by Lemma~\ref{lem:cost},
\[
\sum_{i=1}^k \cost(C_i) = \sum_i^k (|C_i| - 1) = \bigg(\sum_i^k |C_i| \bigg) - k = (3 - \alpha_{min})n.\qedhere
\vspace*{-\baselineskip}
\]
\end{proof}

\subsection{Soundness} 
\begin{lemma}
If every vertex cover of $G$ has size of at least $\alpha_{max} n$, then any solution of the $k$-means instance produced by the reduction costs at least $(3 - \alpha_{min} + \frac{1}{3}(\alpha_{max} - \alpha_{min}))n$.
\end{lemma}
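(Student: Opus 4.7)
The plan is to extract a small vertex cover of $G$ from any $k$-means solution and invoke the assumed bound that every such cover has size at least $\alpha_{max} n$. Write $\cost(C_i) = |C_i| - 1 + \delta_i$, so the total cost becomes $T = 4n - k + \sum_i \delta_i$, reducing the goal to showing $\sum_i \delta_i \geq \tfrac{1}{3}(\alpha_{max} - \alpha_{min})n$. Because $G'$ is triangle-free, Lemma~\ref{lem:cost} forces $\delta_i = 0$ exactly when $C_i$ is a star, and $\delta_i \geq \tfrac{1}{2}$ on every other cluster; let $t$ denote the number of non-star clusters.

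First, I would build a vertex cover $S'$ of $G'$ from the clustering. For each star cluster, I would include only its center, covering all $|C_i|$ edges with a single vertex. For each non-star cluster, I would take two vertices guaranteed by Lemma~\ref{lem:cost}: with our parametrization, $\lceil 2|C_i|-1-\cost(C_i)\rceil = \lceil |C_i| - \delta_i\rceil$, so those two vertices leave at most $\lfloor \delta_i\rfloor$ edges of $C_i$ uncovered. In aggregate this picks $(k-t) + 2t = k+t$ vertices and leaves at most $\sum_i \delta_i$ uncovered edges; adding one endpoint per uncovered edge produces a vertex cover $S'$ of $G'$ with $|S'| \leq k + t + \sum_i \delta_i$.

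Next, I would convert $S'$ into a vertex cover $S$ of $G$ that is at least $n$ smaller. For each $e = (u, v) \in E_1$, the middle edge $(v'_{e,u}, v'_{e,v})$ of the 3-path in $G'$ replacing $e$ forces at least one of $\{v'_{e,u}, v'_{e,v}\}$ into $S'$; moreover, if $\{u,v\}\cap S' = \emptyset$ then both dummies must lie in $S'$. Taking $S := (S' \cap V(G)) \cup \{\text{one endpoint per }e \in E_1\text{ with }\{u,v\}\cap S'=\emptyset\}$ gives a vertex cover of $G$ (edges in $E_2$ are handled since both endpoints lie in $V(G)$ and at least one must be in $S'$; edges in $E_1$ by construction), and a direct count of dummy contributions to $|S'|$ yields $|S'| \geq |S| + n$.

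Combining, $\alpha_{max} n \leq |S| \leq |S'| - n \leq \alpha_{min} n + t + \sum_i \delta_i$, so $t + \sum_i \delta_i \geq (\alpha_{max} - \alpha_{min})n$. The bound $\delta_i \geq \tfrac{1}{2}$ for non-star clusters then gives $t \leq 2\sum_i \delta_i$, hence $3\sum_i \delta_i \geq (\alpha_{max} - \alpha_{min})n$, which is exactly what the total-cost reduction demands. I expect the delicate point to be the asymmetric treatment of stars versus non-stars: applying a uniform two-vertex bound would couple $t$ and $\sum_i \delta_i$ only through the $\tfrac{1}{2}$ gap in $\delta$, yielding a factor $\tfrac{1}{4}$ rather than the $\tfrac{1}{3}$ demanded by the statement.
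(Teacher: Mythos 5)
Your proposal is correct and follows essentially the same route as the paper: build a vertex cover of $G'$ from the clustering using one vertex per star and two per non-star plus one per uncovered edge, reduce to a vertex cover of $G$ losing exactly $n$ dummy vertices, and then trade off the number of non-star clusters against $\sum_i \delta_i$ via the $\delta_i \geq \tfrac{1}{2}$ bound. Your final step ($3\sum_i\delta_i \geq t + \sum_i\delta_i$) is a slightly cleaner packaging of the paper's two-case analysis, but the substance is identical.
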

\begin{proof}
Suppose every vertex cover of $G$ has at least $\alpha_{max} n$ vertices. We claim that every vertex cover of $G'$ also has to be large. 
\begin{claim}
Every vertex cover of $G'$ has at least $(\alpha_{max} + 1)n$ vertices. 
\end{claim}
\begin{proof}
Let $S'$ be a vertex cover of $G'$. If $S'$ contains both $v'_{e, u}$ and $v'_{e, v}$ for any $e = (u, v) \in E_1$, 
then $S' \cup \{ u \} \setminus \{ v'_{e, u} \}$ is a vertex cover with the same or smaller size.
Therefore, we can without loss of generality assume that for each $e = (u, v) \in E_1$, $S'$ contains exactly one vertex in $\{ v'_{e, u}, v'_{e, v} \}$. 
Set $S := S' \cap V(G)$, thus $S$ has cardinality $|S'| - n$. 
Each $e \in E_2$ is covered by $S$ by definition. If an $e \in E_1$ is not covered by $S$, at least one of the three edges of $G'$ corresponding to $e$ is not covered by $S'$. 
Thus, every edge $e \in E(G)$ is covered by $S$, so $S$ is a vertex cover of $G$. Since $|S| \geq \alpha_{max}n$, $|S'| \geq (\alpha_{max} + 1)n$. 
\end{proof}


Fix $k$ clusters $C_1, \dots, C_k$. 
Without loss of generality, let $C_1, \dots, C_s$ be clusters that correspond to a star, and $C_{s + 1}, \dots, C_k$ be clusters that do not correspond to a star for any $l$. 
For $i = 1, \dots, s$, let $v(i)$ be the vertex covering all edges in $C_i$, and for $i = s+1, \dots, k$, let $v(i), v'(i)$ be two vertices covering at least $\lceil 2|C_i| - 1 - \cost(C_i) \rceil$ edges in $C_i$ by Lemma~\ref{lem:cost}. 
Let $E^{\dagger} \subseteq E(G')$ be the set of edges not covered by any $v(i)$ or $v'(i)$. 
The cardinality of $|E^{\dagger}|$ is at most 
\[
\sum_{i = s +1}^k (|C_i| - (2|C_i| - 1 - \cost(C_i))) = 
\sum_{i = s +1}^k (\cost(C_i) - (|C_i| - 1)). 
\]

Adding one vertex for each edge of $E^{\dagger}$ to the set $\{v(i)\}_{1 \leq i \leq s} \cup \{ v(i), v'(i) \}_{s + 1 \leq i \leq k }$ yields a vertex cover of $G'$ of size at most 
\[
s + 2(k - s) + \sum_{i = s +1}^k (\cost(C_i) - (|C_i| - 1)). 
\]
Every vertex cover of $G'$ has size of at least $(\alpha_{max} + 1)n = k + (\alpha_{max} - \alpha_{min})n$, so we have 
\[
(k - s) + \sum_{i = s +1}^k (\cost(C_i) - (|C_i| - 1)) \geq (\alpha_{max} - \alpha_{min})n. 
\]
Now, either $k - s \geq \frac{2}{3}(\alpha_{max} - \alpha_{min})n$ or $\sum_{i = s +1}^k (\cost(C_i) - (|C_i| - 1)) \geq \frac{1}{3}(\alpha_{max} - \alpha_{min})n$.
In the former case, since $\cost(C_i) \geq |C_i| - \frac{1}{2}$ for $i > s$ by Lemma~\ref{lem:cost}, the total cost is 
\[
\sum_{i=1}^k \cost(C_i) \geq \sum_{i=1}^s (|C_i| - 1) + \sum_{i=s+1}^k (|C_i| - \tfrac{1}{2}) \geq \bigg(\sum_i^k |C_i| \bigg) - k + \frac{(\alpha_{max} - \alpha_{min})n}{3}.
\]
In the latter case, the total cost can be split to obtain that  
$\sum\limits_{i=1}^k \cost(C_i) \geq \sum\limits_{i=1}^k (|C_i| - 1) + \sum\limits_{i=s+1}^k (\cost(C_i) - (|C_i| - 1))
  \geq \big(\sum\limits_i^k |C_i| \big) - k + \frac{1}{3}(\alpha_{max} - \alpha_{min})n.$
Therefore, in any case, the total cost is at least 
\[
\bigg(\sum_i^k |C_i| \bigg) - k + \frac{1}{3}(\alpha_{max} - \alpha_{min})n = \left(3 - \alpha_{min} + \frac{1}{3}(\alpha_{max} - \alpha_{min})\right)n.\qedhere
\vspace*{-\baselineskip}
\]
\end{proof}

The above completeness and soundness analyses show that it is NP-hard to distinguish the following cases.
\begin{itemize}
\item There exists a solution of cost at most $(3 - \alpha_{min})n$. 
\item Every solution has cost at least $(3 - \alpha_{min} + \frac{\alpha_{max} - \alpha_{min}}{3})n$.
\end{itemize}
Therefore, it is NP-hard to approximate $k$-means within a factor of
\[
\frac{(3 - \alpha_{min} + \frac{\alpha_{max} - \alpha_{min}}{3})n}{(3 - \alpha_{min})n} 
= 1 + \frac{\alpha_{max} - \alpha_{\min}}{3(3-\alpha_{min})}
= 1 + \frac{1}{3(10\mu_{4,k}+28)} 
\geq 1.0013.
\]

\bibliography{references-kmeanskt}
\bibliographystyle{amsalpha}

\appendix

\section{Remark on Theorem 2}\label{appendix:thm2}
To obtain Theorem 2, note that the proof of Theorem 17 in~\cite{CC06} states that it is NP-hard to distinguish whether the vertex cover has at most
\footnotesize
\[
|V(G)|\frac{2(|V(H)|-M(H))/k + 8 + 2\varepsilon}{2|V(H)|/k + 12} \text{ \normalsize or at least }
|V(G)|\frac{2(|V(H)|-M(H))/k + 9 + 2\varepsilon}{2|V(H)|/k + 12}
\]
\normalsize
vertices. By the assumption in the first sentence of the proof and because $|V(H)|=2M(H)$, $(|V(H)|-M(H))/k$ and $|V(H)|/k$ can be replaced by $\mu_{4,k}$ as defined in Definition 6 in~\cite{CC06}. By Theorem 16 in~\cite{CC06}, $\mu_{4,k} \le 21.7$.

\end{document}